\documentclass[12pt,a4paper,normalheadings,headsepline,headinclude,bibtotoc,fleqn,review]{article}
\usepackage[latin1]{inputenc}
\usepackage{amsmath}
\usepackage{booktabs}
\usepackage{array}
\usepackage{amsthm}
\usepackage{dcolumn}
\usepackage{endnotes}
\usepackage{units}
\usepackage{marvosym}
\usepackage[lines=10]{geometry}
\usepackage{longtable}
\usepackage{rotating}
\usepackage{graphicx}
\usepackage{wrapfig}
\usepackage{expdlist}
\usepackage{amssymb}
\usepackage{gloss}
\usepackage{nomencl}
\usepackage{natbib}
\newtheorem{rem}{Remark}

\newtheorem{prop}{Proposition}

    \makegloss

\renewcommand{\thesection}{\arabic{section}}

\usepackage{fancyhdr} 
\begin{document}

\begin{center}\Large{Endogenous Fertility Waves and the Dynamics of Utility in an Overlapping Generations Model}\footnote{Much of this paper was written during my time at the Max Planck Institute in Bonn as well as during my time at Zhejiang University.}\end{center}
\begin{center} \textit{Wolfgang Kuhle}\\ \textit{Corvinus University Budapest, Budapest, Hungary, E-mail wkuhle@gmx.de\\ MEA, Max Planck Institute for Social Law and Social Policy, Munich, Germany}\end{center}

\nocite{wei71}

\noindent\emph{\textbf{Abstract:} This paper investigates the conditions under which the Easterlin hypothesis holds within a neoclassical overlapping generations model with endogenous capital accumulation, wages, interest rates, and fertility. We develop a tractable analytical framework that maps economic transitions into utility space via a continuously differentiable first-order difference equation for cohort lifetime utilities. This reformulation allows for a transparent normative evaluation of non-steady-state paths without requiring explicit solutions to the underlying nonlinear system. Within this framework, we show that when fertility cycles emerge and children are normal goods, the utility of small cohorts strictly exceeds that of large cohorts. Crucially, this cohort-welfare asymmetry is driven by fertility preferences and is independent of the economy's position relative to the golden rule.}\\
\textbf{Keywords: Fertility Waves, Easterlin Hypothesis, Utility Dynamics}\\
\textbf{JEL: E21, E25, H55}


\section{Introduction}\label{sg1}

The \citet{Eas61} hypothesis posits a self-generating mechanism of fertility waves: a decline in labor market entry raises fertility, which, with a two-decade lag, increases labor market entry and subsequently lowers fertility. While the literature has mainly studied either (i) endogenous fertility waves without capital accumulation as in \citet{Sam766} or (ii) steady state models with endogenous capital accumulation and exogenous fertility,\footnote{\citet{Sam75}, \citet{Mic93}, \citet{Jae09}, \citet{Ponth2010}, \citet{cro12}, and \citet{Fel16}.} the present paper merges these strands. That is, we focus on non-steady state paths in which capital accumulation, wages, interest rates, and fertility are endogenous. Within this framework, we examine the conditions under which the model generates fertility cycles consistent with the Easterlin hypothesis. Moreover, we show that whenever the model generates Easterlin-type fertility cycles, the utility of individuals born into small cohorts strictly exceeds the utility of those born into large cohorts. 

To evaluate the welfare consequences of such fluctuations, we introduce an approach based on the law of motion of lifetime utility. That is, we first show that the standard neoclassical OLG model implies that cohort utilities evolve according to a first-order difference equation, \( U_{t+1} = \xi(U_t) \). The advantage of analyzing dynamics in ``utility space" rather than ``capital space" (\( k_{t+1} = \psi(k_t) \)) is that it directly records the lifetime utilities of successive cohorts, thereby providing a normative evaluation of different non-steady state growth paths. Moreover, the approach applies to general utility and production functions and does not require explicit solutions of the highly nonlinear system. The present analytical technique thus complements numeric approaches, e.g., by  \citet{Lud07b}, \citet{LudVog10} and \citet{LudSchVog12}.  

In Section \ref{sg2} we derive the dynamics of utility for the textbook OLG model with exogenous fertility, showing that utility and capital converge at the same rate, but not necessarily in the same direction. That is, utility may fall while the capital labor ratio is increasing, and vice versa. 

In Section \ref{sect32} we introduce endogenous fertility and use the dynamics of utility to characterize the correlation between cohort size and lifetime utility. In particular, we show that small cohorts enjoy a lifetime utility greater than that of large cohorts, provided that children are normal goods. Moreover, we show that this condition is independent of the golden rule criterion, which is pivotal in steady state utility comparisons.

The paper is organized as follows. Section \ref{sg2} presents the dynamics of utility with exogenous fertility. Section \ref{sect32} discusses the endogenous fertility case. Section \ref{rel} relates our findings to the literature. Section \ref{sg4} concludes.

\section{The Dynamics of Utility}\label{sg2}

Production is characterized by capital and labor inputs, $K_t$ and
$L_t$, which produce aggregate output $Y_t=F(K_t,L_t)$, where
$F(,)$ is concave, first-degree-homogenous, and satisfies the
Inada conditions. Output per unit of labor in period $t$ can thus
be written as:
\begin{eqnarray} y_t=f(k_t), \quad k_t:=\frac{K_t}{L_t}, \quad f'(k_t)>0,\quad f''(k_t)<0.   \label{1}\end{eqnarray}
Factor markets are competitive such that interest $r_t$ and wages
$w_t$ are
\begin{eqnarray} r_t=f'(k_t),\quad w_t=f(k_t)-f'(k_t)k_t.\label{2}\end{eqnarray}

Population grows at rate $n$, and each cohort lives for two
periods. In the first period of life, each individual supplies one
unit of labor inelastically. The second period of life is spent in
retirement. The associated life-cycle savings problem for an
individual born in period $t$ is thus given by:
\begin{eqnarray}
U_t=\max_{s_t,c^1_t,c^2_{t+1}}U(c_t^1,c_{t+1}^2) \quad s.t. \quad
c^1_t=w_t-s_t, \quad
c^2_{t+1}=s_t(1+r_{t+1}),\label{3}\end{eqnarray} such that the
first-order condition
\begin{eqnarray}\frac{U_{c^1}}{U_{c^2}}=1+r_{t+1}\quad
\Leftrightarrow\quad s_t=s(w_t,r_{t+1}), \quad 0<s_w<1,\quad
s_r\gtreqqless0, \label{4}\end{eqnarray} implies savings.
Regarding utility (\ref{3}), we assume that it is
concave, twice continuously differentiable, and that it satisfies
the Inada conditions, such that savings in (\ref{4}) are positive, unique,
and continuously differentiable in its two arguments $w_t$ and
$r_{t+1}$. Moreover, we assume that first- and second-period
consumption are both normal goods, such that the propensity to save
out of income satisfies $0<s_w<1$.

Finally, the life-cycle savings condition
\begin{eqnarray}  (1+n)k_{t+1}=s(w_t,r_{t+1}) \label{5} \end{eqnarray}
closes the model. To obtain the law of motion of the capital
stock, we substitute (\ref{2}) into (\ref{5}) to obtain:
\begin{eqnarray} (1+n)k_{t+1}=s(f(k_t)-f'(k_t)k_t,f'(k_{t+1})).  \label{6}\end{eqnarray}
Regarding (\ref{6}), we follow \citet{Gal89} and \citet{Cro02} and
assume the standard conditions for the existence of a unique
monotonically stable steady state equilibrium, where
$k_{t+1}=k_t=k$ and
\begin{eqnarray} &&0<\frac{dk_{t+1}}{dk_t}=\frac{-s_wf''(k)k}{1+n-s_rf''(k)}<1,\label{7}\end{eqnarray}
are satisfied. To study the dynamics of utility, we now rewrite the model (\ref{1})-(\ref{7}) in terms of utilities
$U_t$ rather than capital intensities $k_t$. That is, rather than tracking out of steady state transitions via capital intensities, our goal is to evaluate macroeconomic adjustments directly through the lens of lifetime welfare. To change the system's characterization from traditional "capital space" to "utility space," we must establish that each capital stock $k_t$ maps uniquely to a specific lifetime utility level $U_t$. Proposition \ref{c1} formalizes this coordinate transformation:

\begin{prop}\label{c1} Let $U$ be the steady state level of life-cycle utility.
For a neighborhood $\mathcal{N}$ of $U$, there exists a
continuously differentiable first-order difference equation such
that $U_{t+1}=\xi(U_t)$, $\xi\in\mathcal{C}^1$, if life-cycle
utility $U_t$ is in this neighborhood. Moreover, in this
neighborhood of $U$, we have
$\frac{dU_{t+1}}{dU_t}=\xi_u(U)=\psi_k(k)=\frac{dk_{t+1}}{dk_t}$, such that
capital and utility converge at the same rate to the steady
state.\end{prop}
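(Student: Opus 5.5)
The plan is to express lifetime utility as a smooth function of the current capital intensity, $U_t=g(k_t)$, show that $g$ is locally invertible at the steady state, and then define $\xi$ by conjugating the capital dynamics: $\xi:=g\circ\psi\circ g^{-1}$.

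\emph{Step 1: the capital dynamics.} First I would make the law of motion $k_{t+1}=\psi(k_t)$ precise near the steady state $k$. Write (\ref{6}) as $\Phi(k_t,k_{t+1}):=(1+n)k_{t+1}-s(f(k_t)-f'(k_t)k_t,f'(k_{t+1}))=0$. Its partial derivative with respect to $k_{t+1}$ is $1+n-s_rf''(k)$. This is nonzero, because it is the (positive) denominator in (\ref{7}). The implicit function theorem then gives a locally unique $\psi\in\mathcal{C}^1$ with $\psi(k)=k$, and $\psi_k(k)$ is exactly the expression in (\ref{7}).

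\emph{Step 2: utility as a function of capital.} Let $V(w,r)$ be the indirect utility function of problem (\ref{3}). It is $\mathcal{C}^1$, since $U\in\mathcal{C}^2$ and $s$ is $\mathcal{C}^1$. Using (\ref{2}),
\begin{equation*}
U_t=V\big(f(k_t)-f'(k_t)k_t,\;f'(\psi(k_t))\big)=:g(k_t),\qquad g\in\mathcal{C}^1 .
\end{equation*}
By the envelope theorem, $V_w=U_{c^1}$ and $V_r=U_{c^2}s=U_{c^1}s/(1+r)$, the last equality by (\ref{4}). At the steady state, (\ref{5}) gives $s=(1+n)k$, and therefore
\begin{equation*}
g'(k)=-U_{c^1}f''(k)\,k\left[1-\frac{(1+n)\psi_k(k)}{1+r}\right].
\end{equation*}

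\emph{Step 3: local invertibility.} This is the step I expect to be the main obstacle. We need $g'(k)\neq0$. The bracket can have either sign: whether capital and utility move together depends on the golden-rule comparison of $1+r$ with $(1+n)\psi_k$. That is precisely the source of the ``same rate, possibly opposite direction'' remark, so I would not try to fix the sign. I would instead rule out the knife-edge case $\psi_k(k)=(1+r)/(1+n)$, either as a nondegeneracy assumption or by a genericity argument. Given $g'(k)\neq0$, the inverse function theorem yields neighborhoods $\mathcal{M}$ of $k$ and $\mathcal{N}$ of $U=g(k)$ together with a $\mathcal{C}^1$ inverse $g^{-1}:\mathcal{N}\to\mathcal{M}$. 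Shrinking these neighborhoods if necessary, the monotone stability in (\ref{7}) keeps $\psi(\mathcal{M})$ inside the domain of $g$, so the composition below is well defined.

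\emph{Step 4: the utility dynamics and its slope.} Define $\xi:=g\circ\psi\circ g^{-1}$ on $\mathcal{N}$. It is $\mathcal{C}^1$ as a composition of $\mathcal{C}^1$ maps, and by construction $U_{t+1}=g(k_{t+1})=g(\psi(k_t))=\xi(U_t)$. The chain rule gives
\begin{equation*}
\xi_u(U_t)=\frac{g'(\psi(k_t))\,\psi_k(k_t)}{g'(k_t)} .
\end{equation*}
At the fixed point we have $\psi(k)=k$, so the factors of $g'$ cancel and $\xi_u(U)=\psi_k(k)=dk_{t+1}/dk_t$. Hence utility and capital converge to the steady state at the same rate.
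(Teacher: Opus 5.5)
Your proposal is correct and is essentially the paper's argument: your $g$ is the paper's $\phi^{-1}$, with the same derivative (\ref{94}); your $\xi=g\circ\psi\circ g^{-1}$ is exactly (\ref{10}); and the cancellation of the $g'$ factors at the fixed point is (\ref{11}). The paper also excludes the knife-edge case $\psi_k(k)=(1+r)/(1+n)$ that you flag, restricting in Appendix \ref{a0} to cases where $dU_t/dk_t\neq 0$, so state it as an explicit nondegeneracy hypothesis rather than appealing to genericity.
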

\begin{proof} By our assumptions, $U(,),s(,),f'(),w(),r()$ are all at least once continuously
differentiable ($\mathcal{C}^1$). Moreover, except for
$s_r$, the functions' partial derivatives are all non-vanishing.
Therefore, the implicit
function theorem ensures that there exist functions $\phi()$ and $\psi()$,
such that:\footnote{In Appendix \ref{a0}, we show that the
technical conditions that allow us to apply the implicit function
theorem to derive (\ref{8}), (\ref{9}), and (\ref{10}) are indeed
satisfied. With the formalism in place, we prove a stronger result
than Proposition \ref{c1}. Namely, we show that $\xi()$ is a
$\mathcal{C}^1$ diffeomorphism. That is, $\xi()$ and its inverse
$\xi^{-1}()$ are both continuously differentiable. Hence, we can
also write $U_{t-1}=\xi^{-1}(U_{t})$ and use current utilities
$U_{t}$ to calculate past utilities $U_{t-1}$.}
\begin{eqnarray} &&(1+n)k_{t+1}=s(w(k_t),r(k_{t+1})) \quad \Leftrightarrow \quad k_{t+1}=\psi(k_t), \quad \psi\in
\mathcal{C}^1\label{8}\end{eqnarray} and
\begin{eqnarray}U_t&&=U\Big(w(k_t)-s(w(k_t),r(k_{t+1})),(1+r(k_{t+1}))s(w(k_t),r(k_{t+1}))\Big)=\tilde{U}(k_t,k_{t+1})\nonumber\\
&&=_{|(\ref{8})}\tilde{U}(k_t,\psi(k_t)) \quad \Leftrightarrow
\quad k_t=\phi(U_t), \quad \phi\in
\mathcal{C}^1\label{9}\end{eqnarray} Substituting (\ref{9}) into
(\ref{8}) yields a first-order difference equation in
$U_t$:\footnote{By the implicit function theorem, we know
(\citet{Lee03}, pp. 164-166) that $\phi^{-1}\in\mathcal{C}^1$
exists. Moreover, we know that $\xi\in\mathcal{C}^1$, since the
composition of continuously differentiable functions is itself
continuously differentiable. See Appendix \ref{a0} for details on
$\phi^{-1}()$.}
\begin{eqnarray} \phi(U_{t+1})=\psi(\phi(U_t)) \quad \Leftrightarrow \quad U_{t+1}=\phi^{-1}(\psi(\phi(U_t)))=:\xi(U_t),\quad \xi\in\mathcal{C}^1   \label{10} \end{eqnarray}

With this continuously differentiable first-order difference
equation in place, it remains to show that utility and capital
converge at the same rate. Once we differentiate (\ref{10}) around
the steady state, where $U_{t+1}=U_t=U$, the condition for
monotonous convergence is:
\begin{eqnarray} 0<\frac{dU_{t+1}}{dU_t}=\frac{\phi_u(U)\psi_k(\phi(U))}{\phi_u(U)}=\psi_k(\phi(U))<1   \label{11} \end{eqnarray}
To relate condition (\ref{11}) to the convergence of the capital
stock, we recall that
$\psi_k(\phi(U))=_{|(\ref{9})}\psi_k(k)=\frac{dk_{t+1}}{dk_t}$ and
note that (\ref{7}) implies convergence, $\psi_k(k)\in(0,1)$.

Finally, we note that differentiation of (\ref{10}) yields $\frac{dU_{t+1}}{dU_t}=\xi_u(U)=_{|(\ref{11})}\psi_k(k)$. Hence, utility and capital converge monotonically and at the same rate $\psi_k(k)$. 
\end{proof}

\begin{center}
\begin{figure}
\begin{center}
\scalebox{1}{\includegraphics*{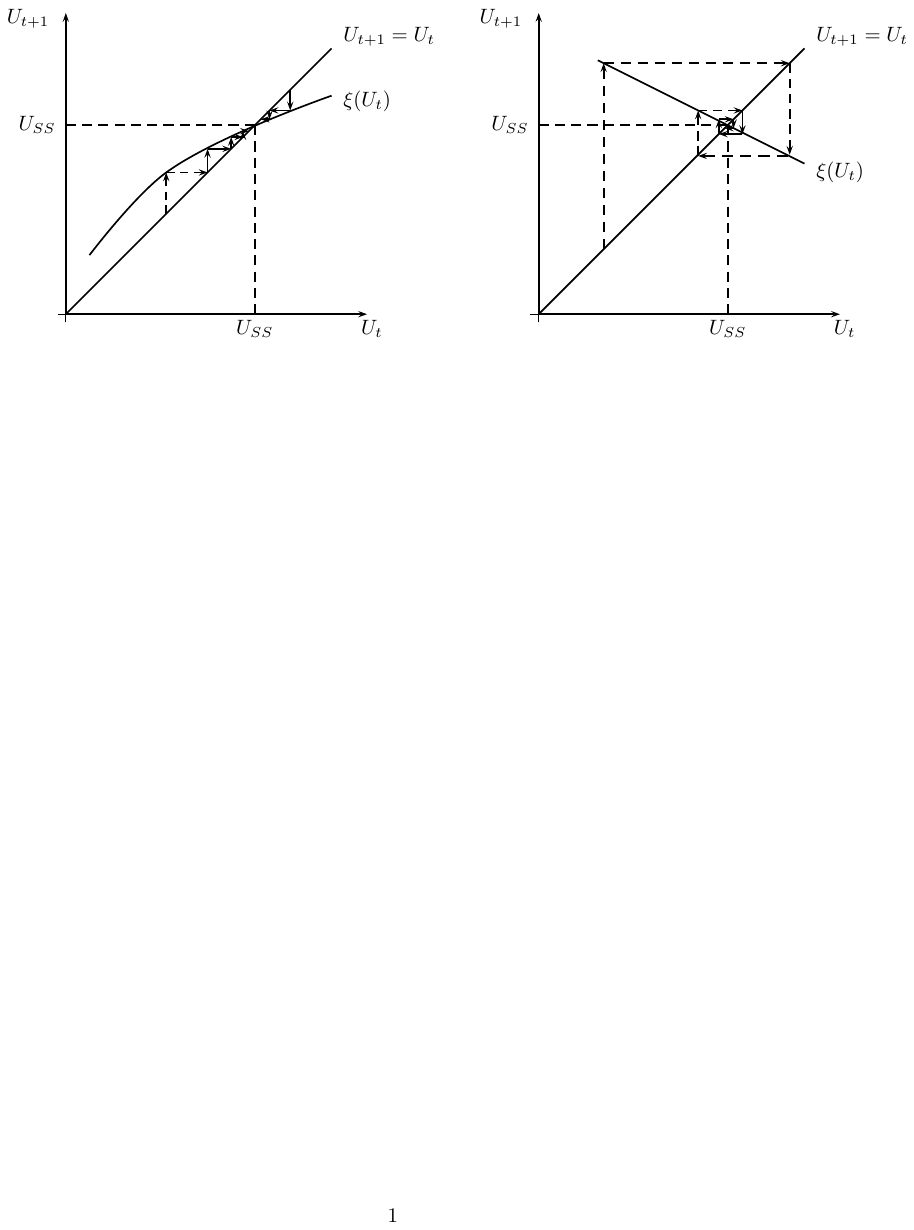}}
\caption{\emph{The Dynamics
of Utility.}}\label{g1}\end{center}\emph{}
\end{figure}
\end{center}

\begin{rem} It is straightforward to show that Proposition \ref{c1} extends to the case where capital and thus utility converge in cycles as
depicted on the right in Diagram $\ref{g1}$. That is, where
$\xi_u=\psi_k\in(-1,0)$.\end{rem} 

\begin{rem} Proposition \ref{c1} applies to transition paths near the steady state. If the economy is far from the final steady state, we have
$\frac{dU_{t+1}}{dU_t}=\frac{\phi_u(U_t)\psi_k(\phi(U_{t}))}{\phi_u(U_{t+1})}\neq\psi_k(\phi(U_{t}))$,
and utility and capital do not converge at the same rate along the
transition path. Related, the derivatives $\phi_u$ and
$\phi^{-1}_k$ can change signs along the transition path. That is,
initially utility can be monotonically increasing at first but later
decreasing, if the economy moves from efficient to a sufficiently
inefficient capital stock. \end{rem}

\subsection{Capital ``in'' and ``near'' the Steady State}\label{sect31}
Proposition \ref{c1} shows that utility and capital converge at
the same rate $\psi_k$ to the steady state. This, however, does
not mean that utility and capital move in the same direction; nor
does it mean that utility near the steady state decreases in
capital if steady state utility does. More precisely, we have
\begin{prop} Utility near the steady state increases
with the capital stock, such that $sign\{dU_t\}=sign\{dk_t\}$, if
and only if $\frac{1+r}{1+n}>\psi_k$.\label{c2}\end{prop}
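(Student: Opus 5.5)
We will prove the statement by differentiating the reduced-form utility $U_t=\tilde U(k_t,\psi(k_t))$ from (\ref{9}) at the steady state and determining the sign of $d U_t/d k_t=\tilde U_{k_t}+\tilde U_{k_{t+1}}\psi_k$. The first step applies the envelope theorem to the household problem (\ref{3}). Since savings are chosen optimally, a marginal change in $w_t$ or $r_{t+1}$ affects $U_t$ only through its direct effect on the budget. Hence
\begin{eqnarray*}
dU_t=U_{c^1}\,dw_t+U_{c^2}\,s_t\,dr_{t+1}.
\end{eqnarray*}
Using (\ref{2}), we have $dw_t=-f''(k_t)k_t\,dk_t$ and $dr_{t+1}=f''(k_{t+1})\,dk_{t+1}$.

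The second step evaluates these expressions at the steady state. There $k_{t+1}=k_t=k$, the first-order condition (\ref{4}) gives $U_{c^2}=U_{c^1}/(1+r)$, and the savings condition (\ref{5}) gives $s=(1+n)k$. Substituting $dk_{t+1}=\psi_k\,dk_t$ yields
\begin{eqnarray*}
\frac{dU_t}{dk_t}=U_{c^1}\,(-f''(k))\,k\left[1-\frac{1+n}{1+r}\,\psi_k\right].
\end{eqnarray*}

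The third step reads off the sign. Because $U_{c^1}>0$, $-f''(k)>0$ and $k>0$, the sign of $dU_t/dk_t$ equals the sign of the bracket. Since $1+n>0$ and $1+r>0$, the bracket is positive if and only if $\frac{1+r}{1+n}>\psi_k$. By continuity of the derivatives, which is guaranteed by the $\mathcal{C}^1$ properties used in Proposition \ref{c1}, this sign persists in a neighborhood of the steady state. Hence $sign\{dU_t\}=sign\{dk_t\}$ near the steady state exactly under the stated condition. The same formula gives the reverse sign when $\frac{1+r}{1+n}<\psi_k$.

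The step that needs the most care is the derivative of the reduced form. One must differentiate $\tilde U(k_t,\psi(k_t))$ as a total derivative, which includes the induced change in $k_{t+1}$ through $\psi$, rather than holding $k_{t+1}$ fixed. One must also make sure that the envelope argument correctly removes the terms $s_w$ and $s_r$. The argument also uses the steady-state substitutions $s=(1+n)k$ and $U_{c^2}(1+r)=U_{c^1}$, and these hold only at the steady state. This is why the statement is local. Away from the steady state, the factor $\frac{s_t}{(1+r_{t+1})}\frac{f''(k_{t+1})}{f''(k_t)k_t}$ would replace $\frac{1+n}{1+r}$, in line with the remarks following Proposition \ref{c1}.
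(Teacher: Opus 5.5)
Your proposal is correct and is essentially the paper's own argument: the paper's proof reduces to the sign of $\phi^{-1}_k=dU_t/dk_t=-U_{c^1}\big((1+r)-\psi_k(1+n)\big)\frac{1}{1+r}f''(k)k$. Its appendix derives this exactly as you do, via the envelope theorem, the Euler equation, and $s=(1+n)k$ at the steady state (equations (\ref{93})--(\ref{94})). Your continuity step for the neighborhood makes explicit what the paper leaves implicit, and the paper's extra iteration $dU_t/dU_0=\psi_k^t$ is not needed for this statement (minor wording point: $U_{c^1}=U_{c^2}(1+r_{t+1})$ and $s_t=(1+n)k_{t+1}$ hold at every date, and only $k_{t+1}=k_t$ is specific to the steady state).
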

\begin{proof} Using (\ref{9}), we write
\begin{eqnarray} &&U_1=\xi(U_0)\nonumber\\
&&U_2=\xi(U_1)=\xi(\xi(U_0))   \nonumber\\
&&...\nonumber\\
&& U_t=\xi^{(t)}(U_0) \label{100}\end{eqnarray} Applying the
chain rule to (\ref{100}) yields the change in cohort $t's$
utility $\frac{dU_t}{dU_0}=\xi_u^t=_{|(\ref{11})}\psi_k^t$
associated with a change $dU_0$. That is, $dU_0>0$ ($dU_0<0$)
implies $dU_t>0$ ($dU_t<0$) as long as $0<\psi_k$. Regarding capital we recall the
definition of $\phi$ in equations (\ref{92})-(\ref{94}), i.e.,
that
$\phi^{-1}_k=\frac{dU_t}{dk_t}=-U_{c^1}\Big((1+r)-\psi_k(1+n)\Big)\frac{1}{1+r}f''(k)k$.
capital and utility move in the same direction as long as
$(1+r)-\psi_k(1+n)>0$. Hence, capital and utility move in the same direction as long as
$(1+r)-\psi_k(1+n)>0$.\end{proof}

Proposition \ref{c2} shows that the relation between capital and
utility can be very different along the transition path than in the steady state, where $k_{t+1}=k_t=k$ and
thus $\frac{dU}{dk}=-U_{c^1}(r-n)f''(k)k$. This difference thus hinges
on the speed of convergence $\psi_k$. Along a transition path,
increments in capital $dk_t>0$ increase wages by
$dw_t=-f''kdk_t>0$ and reduce interest
$dr_{t+1}=\frac{dk_{t+1}}{dk_t}f''dk_t=\psi_kf''dk_t<0$. This
reduction in interest, however, is smaller than the reduction in
steady state interest, which amounts to $dr=f''dk$. Hence,
increases in capital can increase utility even if the economy
grows on a path where capital exceeds the golden rule level.

This difference is more pronounced in economies that converge
rapidly to the steady state, i.e., where $\psi_k$ is small. Rapid
convergence to the steady state increases the differences between
changes in utility \emph{near} the steady state and changes in
utility \emph{in} the steady state. In contrast, economies
that converge slowly to the steady state behave more similar to
economies that are in a steady state. 

\section{Endogenous Fertility Cycles}\label{sect32}

Compared to the baseline model, members of cohort $t$ now choose
savings and fertility $n_{t}$ when they are working aged:
\begin{eqnarray}
&&U_t=\max_{s_t,c^1_t,c^2_{t+1},n_{t}}U(c_t^1,c_{t+1}^2,n_{t}),\quad U_n>0, \quad U_{nn}<0\label{171}\\
&& \quad s.t. \quad c^1_t=w_t-s_t-c(n_{t}), \quad c'()>0, c''()>0,
\quad c^2_{t+1}=s_t(1+r_{t+1})\nonumber\end{eqnarray} Where utility (\ref{171}) assumes that agents trade off the cost $c(n)$ of having children with the pleasures of parenthood. As before, we assume that utility is concave in its three arguments and satisfies the Inada conditions, such that the first-order conditions have interior solutions
\begin{eqnarray}
   && \frac{\partial U_t}{\partial n_t}=0\quad \Leftrightarrow \quad n_{t}=n(w_t,r_{t+1})>-1, \quad n_w=\frac{\partial n_t}{\partial w_t}>0 \quad n_r=\frac{\partial n_{t}}{\partial r_{t+1}}>0 \label{Fertilitynormal goods}\\
   && \frac{\partial U_t}{\partial s_t}=0 \quad \Leftrightarrow \quad s_t=s(w_t,r_{t+1})
\end{eqnarray}
Following \cite{Eas61}, we assume in (\ref{Fertilitynormal goods}) that children are ``normal goods.'' That is, $n_w>0$ and $n_r>0$ reflect that income effects increase fertility. Finally, with endogenous fertility, the life cycle savings condition is:
\begin{eqnarray} (1+n(f(k_t)-f'(k_t)k_t,f'(k_{t+1})))k_{t+1}=s(f(k_t)-f'(k_t)k_t,f'(k_{t+1})) \quad \Leftrightarrow \quad k_{t+1}=\psi(k_t) \label{17}\end{eqnarray}
such that differentiation around the steady state yields:
\begin{eqnarray} &&\psi_k=\frac{dk_{t+1}}{dk_t}=\frac{(n_wk-s_w)f''(k)k}{1+n+(n_rk-s_r)f''(k)}\label{18}\end{eqnarray}
Equation (\ref{18}) implies:
\begin{prop} \label{Prop 1 2024}
    Fertility evolves in cycles if and only if $\psi_k\in(-1,0)$. Fertility $n$ increases in the capital labor ratio $k$. \end{prop}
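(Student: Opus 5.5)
The plan is to work with the reduced-form fertility path. Along an equilibrium path, fertility is $n_t=n(w(k_t),r(k_{t+1}))=n(w(k_t),r(\psi(k_t)))=:\nu(k_t)$, a $\mathcal{C}^1$ function of the current capital labor ratio alone. I would first prove the second claim, that $\nu$ is increasing in $k$, and then deduce the first claim from how $k_t$ itself moves near the steady state.

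\textbf{Monotonicity of fertility.} Differentiate $\nu$ around the steady state:
\begin{eqnarray*}
\nu_k=n_w w'(k)+n_r r'(k)\psi_k=-n_w f''(k)k+n_r f''(k)\psi_k .
\end{eqnarray*}
The first term is positive because $n_w>0$ and $f''<0$. The second term has the sign of $-\psi_k$. So when $\psi_k<0$, which is the cyclical case of interest, both terms are positive and $\nu_k>0$. When $\psi_k\in(0,1)$, I would substitute (\ref{18}) for $\psi_k$ and use $n_r>0$ together with the denominator $1+n+(n_rk-s_r)f''k$. The sign then rests on $n_w k>n_r\psi_k$, i.e.\ the wage effect on fertility dominating the dampened interest effect.

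This is the step I expect to be the main obstacle. Normality of children alone does not obviously sign $\nu_k$ in the monotone case, and I see two ways to handle it:
\begin{itemize}
\item read the claim as referring to the comparison across steady states, or to the wage channel in the first-period decision, where $dn/dk=-n_wf''k>0$ holds directly; or
\item add the mild condition that the wage effect dominates, using $|\psi_k|<1$ from stability to bound the interest term.
\end{itemize}
The safest route is to state explicitly that $\nu_k>0$ whenever $\psi_k\le 0$, and otherwise under $n_w k> n_r\psi_k$. That is exactly what the cycle characterization below needs.

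\textbf{Cycles in fertility.} Linearize around the steady state: $k_{t+1}-k\approx\psi_k(k_t-k)$, so $k_t-k\approx\psi_k^t(k_0-k)$. By the argument of Proposition \ref{c1} and the Remark following it, applied to (\ref{17}), convergence is oscillatory exactly when $\psi_k\in(-1,0)$ and monotone when $\psi_k\in(0,1)$. Since $\nu$ is $\mathcal{C}^1$ with $\nu_k\ne 0$ near the steady state, $n_t-n\approx\nu_k(k_t-k)$. The sign of $n_t-n$ therefore alternates if and only if the sign of $k_t-k$ alternates, i.e.\ if and only if $\psi_k<0$. Local stability, assumed as in (\ref{7}), supplies $\psi_k>-1$.

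It follows that fertility waves occur if and only if $\psi_k\in(-1,0)$, and by (\ref{18}) this is the case exactly when $n_wk>s_w$, given a positive denominator. In that regime large and small cohorts alternate, which is the Easterlin pattern.
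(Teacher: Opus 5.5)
Your proposal follows the paper's proof essentially step for step: you reduce fertility to $n_t=n(w(k_t),r(\psi(k_t)))$, differentiate to get $n_k=-n_wf''(k)k+n_rf''(k)\psi_k$, sign it using normality and $\psi_k<0$, and let $n_t$ inherit the oscillation of $k_t$. Your caveat is well taken, since the paper also signs $n_k$ only by invoking $\psi_k<0$, so monotonicity is established only in the cyclical regime; for the converse direction you actually need only $\nu_k\neq 0$ (i.e.\ $n_wk\neq n_r\psi_k$) when $\psi_k\in(0,1)$, which is weaker than your condition $n_wk>n_r\psi_k$.
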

\begin{proof} Note that $k_{t+1}=\psi(k_t)$ such that $n_t(w_t(k_t),r_{t+1}(k_{t+1}))=n_t(w_t(k_t),r_{t+1}(\psi(k_{t})))=n_t(k_t)$. Hence, fertility moves in cycles with the capital intensity $k_t$. Moreover, near the steady state, we have $n_k:=\frac{dn}{dk}=\frac{\partial n}{\partial w}\frac{\partial w}{\partial k}+\frac{\partial n}{\partial r}\frac{\partial r}{\partial k}\frac{\partial k}{\partial k}=n_w(-f''(k)k)+n_rf''(k)\psi_k>0$, since $\psi_k<0$ and $n_w>0$ and $n_r>0$. That is, fertility $n$ is increasing in the capital labor ratio $k$ if children are normal goods.
\end{proof}

To evaluate these cycles and to answer the
question whether large or small cohorts are ``favored," we rewrite our model into the form (\ref{8})-(\ref{10}) that we used in the proof of Proposition \ref{c1}. This yields:\footnote{It is
straightforward that continuously differentiable functions
$\phi()$, $\phi^{-1}()$, $\psi()$, $\psi^{-1}()$, $\xi()$, and
$\xi^{-1}()$ exist as in Appendix \ref{a0}. Regarding $\phi()$, it
is important to note that
$\frac{dU_t}{dk_t}=\frac{1}{\phi_U}=\frac{\partial U_t}{\partial
k_t}+\frac{\partial U_t}{\partial k_{t+1}}\psi_k+\frac{\partial
U_t}{\partial n_t}\frac{\partial n_t}{\partial
k_t}=-U_{c^1}\Big((1+r)-\psi_k(1+n)\Big)\frac{1}{1+r}f''(k)k$,
where the derivative $\frac{\partial U_t}{\partial
n_t}\frac{\partial n_t}{\partial k_t}$ vanishes by the envelope
theorem, i.e., since $n_t$ is implied by the FOC $\frac{\partial
U_t}{\partial n_t}=0$.}
\begin{eqnarray} k_{t+1}=\psi(k_t), \quad \psi_k\in(-1,0) \label{19} \end{eqnarray}
\begin{eqnarray} k_t=\phi(U_t), \quad \frac{1}{\phi_U}=\frac{dU_t}{dk_t}=-U_{c^1}\Big((1+r)-\psi_k(1+n)\Big)\frac{1}{1+r}f''(k)k>0 \label{20}\end{eqnarray}
\begin{eqnarray} U_{t+1}=\xi(U_t), \quad \xi_u=\psi_k  \label{21}\end{eqnarray}
\begin{eqnarray} n_{t}=n(w(k_t),r(k_{t+1}))=n(w(k_t),r(\psi(k_t)))=n(k_t), \quad \frac{dn_t}{dk_t}>0 \label{22} \end{eqnarray}
Using (\ref{19})-(\ref{22}), we have

\begin{prop} Utility, fertility, and the capital labor ratio converge in cycles to the steady state, and the utility of small cohorts exceeds the utility of large cohorts. \label{p3}\end{prop}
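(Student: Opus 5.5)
The plan is to take the cyclical convergence from the eigenvalue relations (\ref{19}) and (\ref{21}). Then I will use the monotone maps (\ref{20}) and (\ref{22}) to show that utility and fertility are both increasing functions of $k_t$. Finally, I will identify cohort size with the fertility of the parents, i.e. with $n_{t-1}$, and show that it moves opposite to $k_t$.

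\emph{Step 1 (cyclical convergence).} By (\ref{19}), $\psi_k\in(-1,0)$. Linearizing around the steady state gives $k_t-k\approx\psi_k^t(k_0-k)$, so $k_t$ converges with alternating sign of the deviation. By (\ref{21}) and Proposition \ref{c1} (with the first Remark covering the case $\xi_u\in(-1,0)$), $U_t-U\approx\psi_k^t(U_0-U)$, which is the same oscillating convergence. For fertility, (\ref{22}) gives $n_t-n\approx n_k(k_t-k)$ with $n_k>0$ by Proposition \ref{Prop 1 2024}. Hence $n_t$ also converges in cycles, in phase with $k_t$.

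\emph{Step 2 (sign relations).} From (\ref{20}), I need $dU_t/dk_t>0$, i.e. $-U_{c^1}\big((1+r)-\psi_k(1+n)\big)f''(k)k/(1+r)>0$. Since $\psi_k<0$ and $1+n>0$, the bracket is at least $1+r>0$. Together with $f''<0$, this gives a positive sign regardless of whether $r\gtrless n$, which is why the golden rule plays no role here. So $\operatorname{sign}(U_t-U)=\operatorname{sign}(k_t-k)$ near the steady state.

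\emph{Step 3 (cohort size versus utility).} The size of cohort $t$ relative to its parents is $L_t/L_{t-1}=1+n_{t-1}$, so cohort $t$ is \emph{large} exactly when $n_{t-1}>n$. By Step 1, $n_{t-1}-n$ has the sign of $k_{t-1}-k$. Because $\psi_k<0$, $k_{t-1}-k$ has the opposite sign of $k_t-k$. Step 2 then says $k_t-k$ has the sign of $U_t-U$. Chaining these: a large cohort ($n_{t-1}>n$) has $k_{t-1}>k$, hence $k_t<k$, hence $U_t<U$. A small cohort has $U_t>U$. Along the cycle, large and small cohorts alternate, and every small cohort's utility lies above $U$ while every large cohort's lies below. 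This yields the claim.

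\emph{Main obstacle.} The delicate step is Step 3. It requires fixing carefully what ``cohort size'' means, namely the parents' fertility $n_{t-1}$ rather than cohort $t$'s own fertility $n_t$. It also requires ensuring that the linear sign relations hold along the actual path, not just to first order. I would handle this by restricting to the neighborhood $\mathcal N$ of Proposition \ref{c1}. There $\phi^{-1}$ and $n(\cdot)$ are strictly increasing by continuity of their positive derivatives, and $\psi$ is strictly decreasing, so $\psi(k)=k$ gives $k_t<k\Leftrightarrow k_{t+1}>k$. The sign chain then holds exactly rather than approximately.
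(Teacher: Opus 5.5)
Your proposal is correct and follows essentially the paper's route. The paper writes $n_t=n(\phi(\xi^{-1}(U_{t+1})))$ and signs $\frac{dn_t}{dU_{t+1}}=n_k\phi_u\frac{1}{\psi_k}<0$; since $\xi^{-1}=\phi^{-1}\circ\psi^{-1}\circ\phi$, this is exactly your chain linking $n_{t-1}$, $k_{t-1}$, $k_t$ and $U_t$ (shifted by one index), built on the same three signs $n_k>0$, $\psi_k<0$, $\phi_u>0$. Your version, phrased through signs of deviations from the steady state and with the explicit check that the bracket in (\ref{20}) exceeds $1+r$, is a slightly sharper rendering of that same argument.
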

\begin{proof} The first two statements follow immediately from (\ref{19})-(\ref{22}).
Regarding the favored cohort effect, we recall that $n_t$
represents generation $t's$ fertility and thus the relative (to
cohort $t$) size of cohort $t+1$. Hence we have to study
$\frac{dU_{t+1}}{dn_{t}}$, respectively $\frac{dn_t}{dU_{t+1}}$,
to obtain the condition under which small cohorts are favored.
Using (\ref{22}) yields
$n_t=_{|(\ref{22})}n(k_t)=_{|(\ref{20})}n(\phi(U_t))=_{|(\ref{21})}n(\phi(\xi^{-1}(U_{t+1})))$,
such that
$\frac{dn_{t}}{dU_{t+1}}=n_k\phi_u\frac{1}{\xi_u}=_{|(\ref{21})}n_k\phi_u\frac{1}{\psi_k}<0$ since $n_k>0$, $\phi_u>0$, while $\psi_k<0$.
\end{proof}

Proposition \ref{p3} shows that small cohorts are favored when children are normal goods. Interestingly, this condition is independent of the relative size of \(r\) and \(n\).

To see this, note that the covariance between fertility and capital implies that richer cohorts, born with a high wage rate \(w(k_t)\), have more children. In turn, once this large cohort enters the labor market, it reduces the capital-labor ratio and raises the interest rate \(r_{t+1}\). In other words, the small cohort \(t\) benefits from high wages, chooses high fertility, and enjoys high returns on savings. This effect is reversed for generation \(t+1\): they are more numerous, receive lower wages, have fewer children, and thus face low interest rates.

An alternative interpretation links our result to the classic golden-rule discussion. At a golden-rule steady state, a marginal change in capital has no first-order effect on steady-state utility. However, a small deviation $dk_0$ (for instance, due to a depreciation shock) triggers oscillatory dynamics. Importantly, the welfare consequences of these transitional fluctuations are independent of whether $r\gtrless n$.



\section{Related Literature}\label{rel}

The analysis of convergence in neoclassical overlapping generations models has traditionally been conducted in capital space, via the law of motion $k_{t+1}=\psi(k_t)$ as in \citet{Dia65}, \citet{Gal89}, and \citet{Cro02}. This paper proposes a complementary perspective by characterizing transitional dynamics directly in utility space through the first-order difference equation $U_{t+1}=\xi(U_t)$. Because this transformation preserves the convergence speed of the capital stock while directly tracking cohort lifetime utilities, it facilitates a transparent normative evaluation of non-steady-state paths.

Our framework is closely related to several strands of the literature. First, it builds upon and the \citet{Sam766} pioneering model of self-generated fertility waves. While Samuelson analyzed endogenous fertility cycles in a simplified OLG setup without capital accumulation or utility, the present paper embeds the Easterlin mechanism in a standard neoclassical OLG model with endogenous wages, interest rates, and capital. This allows us not only to generate fertility cycles but also to derive welfare implications for successive cohorts along the transition.

Second, the paper connects to the growing computational literature on endogenous fertility and demographic change. Studies such as \citet{LudVog10} and \citet{LudSchVog12} examine fertility, education, and welfare using large-scale numerical OLG models. Our analytical utility-space approach complements these quantitative exercises by offering a tractable theoretical benchmark for understanding the welfare consequences of fertility fluctuations without relying exclusively on simulations.

Compared to other analytically oriented research, the present framework serves as a convenient tool to study an economy's endogenous fertility patterns along non-steady state paths. Conversely, the methodology in \citet{Kuh14} remains more convenient to extend steady state models where demographic variations are induced parametrically, such as those investigated by \citet{Sam75}, \citet{Mic93}, \citet{Jae09}, \citet{Ponth2010}, \citet{cro12}, \citet{Pes14}, and \citet{Fel16}.

\section{Conclusion}\label{sg4}

This paper develops a tractable analytical framework for studying transitional dynamics in neoclassical OLG models by mapping the economy directly into utility space. We show that cohort lifetime utilities evolve according to a continuously differentiable first-order difference equation, $U_{t+1}=\xi(U_t)$. This reformulation offers a transparent and general method for normative analysis along non-steady-state paths without requiring explicit solutions to the underlying nonlinear system.

Applying this approach to endogenous fertility waves, we provide a characterization of the welfare consequences of Easterlin-type cycles. In particular, we show that when children are normal goods, small cohorts systematically enjoy higher lifetime utility than large cohorts. Moreover, we find that this cohort-welfare asymmetry is driven by fertility preferences rather than the economy's position relative to the golden rule.

The utility-space perspective complements both traditional capital-space analysis and large-scale computational OLG models. It offers a simple tool for evaluating intergenerational welfare. Future research could extend this framework to incorporate human capital accumulation, endogenous longevity, or quantitative calibrations that assess the magnitude of cohort utility differences.

\newpage

\newcounter{thesection}
\numberwithin{equation}{section}

\addcontentsline{toc}{section}{References}
\markboth{References}{References}
\bibliographystyle{apalike}
\bibliography{References}

\newpage

\begin{appendix}

\section{Existence of $\psi()$, $\phi()$, and $\xi()$}\label{a0}


We recall the implicit function theorem.\footnote{See
\citet{Lee03} pp. 164-166, for a proof of the implicit function
theorem and the inverse function theorem.}
\begin{prop} Let $x^*,y^*$ be point where the continuously differentiable function $F(x,y)=0$ is
satisfied. Moreover, let the partial derivative $F_y(x^*,y^*)\neq
0$. Then there exists a neighborhood $\mathcal{N}$ around
$x^*,y^*$ and a continuously differentiable function $g()$ such
that $y=g(x)$ and $\frac{dy}{dx}=g_x(x^*)=-\frac{F_x}{F_y}$ for
all $x,y\in\mathcal{N}$. If the derivative $F_x(x^*,y^*)$ is also
non-vanishing, there also exists an inverse function $g^{-1}()$
such that $x=g^{-1}(y)$, and
$\frac{dx}{dy}=g^{-1}_y(y^*)=-\frac{F_y}{F_x}$ for all
$x,y\in\mathcal{N}$.\label{imp}\end{prop}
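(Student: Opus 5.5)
The statement in Proposition \ref{imp} is the standard implicit function theorem for $F:\mathbb{R}^2\to\mathbb{R}$, supplemented by a local inverse. I would prove the existence part by reducing it to the inverse function theorem, which I take as given (as in \citet{Lee03}). Define $G(x,y)=(x,F(x,y))$. $G$ is $\mathcal{C}^1$ near $(x^*,y^*)$, and its Jacobian $\begin{pmatrix}1&0\\F_x&F_y\end{pmatrix}$ has determinant $F_y(x^*,y^*)\neq0$. The inverse function theorem then gives neighborhoods on which $G$ is a $\mathcal{C}^1$ diffeomorphism onto an open set containing $(x^*,0)$. Its inverse has the form $G^{-1}(x,z)=(x,h(x,z))$, and I set $g(x):=h(x,0)$. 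Then $g$ is $\mathcal{C}^1$ and satisfies $F(x,g(x))=0$. Because $G$ is injective on the neighborhood, this $g(x)$ is the unique zero of $F(x,\cdot)$ there.

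For the derivative formula, I would differentiate the identity $F(x,g(x))\equiv0$ by the chain rule, which gives $F_x+F_y g_x=0$. By continuity, $F_y$ stays nonzero on a possibly smaller neighborhood, so $g_x=-F_x/F_y$ holds throughout $\mathcal{N}$, not only at $x^*$.

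For the inverse part, I would run the same argument with the roles of $x$ and $y$ swapped, using the hypothesis $F_x(x^*,y^*)\neq0$. This produces a $\mathcal{C}^1$ function $\tilde g$ with $F(\tilde g(y),y)=0$ and $\tilde g_y=-F_y/F_x$. The step that needs care is checking that $\tilde g$ really is $g^{-1}$. Shrink to a neighborhood $\mathcal{N}$ on which both uniqueness statements hold. Then for $(x,y)\in\mathcal{N}$ the conditions $y=g(x)$, $F(x,y)=0$, and $x=\tilde g(y)$ are equivalent. Alternatively, $g_x=-F_x/F_y\neq0$ near $x^*$, so $g$ is strictly monotone there and the one-dimensional inverse function theorem gives $g^{-1}\in\mathcal{C}^1$ with $(g^{-1})_y=1/g_x=-F_y/F_x$.

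The main obstacle is bookkeeping rather than substance. The existence and uniqueness neighborhoods, and the set where $F_x,F_y\neq0$, have to be intersected so that a single neighborhood $\mathcal{N}$ supports all the claims at once. The domain of $g$ and the range of $g^{-1}$ also have to be matched.
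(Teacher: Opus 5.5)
Your proposal is correct. The reduction to the inverse function theorem via $G(x,y)=(x,F(x,y))$, the chain-rule derivation of $g_x=-F_x/F_y$, and the identification of $g^{-1}$ (either by intersecting the two uniqueness neighborhoods or by strict monotonicity of $g$) all go through. The paper gives no proof of its own and simply cites Lee (2003, pp.~164--166), whose argument is essentially this same standard one.
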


To apply Proposition \ref{imp}, we note that it follows directly
from our assumptions that the primitive functions $U(,)$ and
$f()$, as well as the functions $s(,)$, $r()$, and $w()$ in
(\ref{8}) and (\ref{9}) are at least once continuously
differentiable.\footnote{In particular, the savings function
implied by the euler equation (\ref{4}) is once continuously
differentiable since $U(,)$ is by assumption twice continuously
differentiable.} Moreover, we know that the composition of
continuously differentiable functions is continuously
differentiable. Hence, to show the existence of $\phi()$,
$\phi^{-1}()$, $\psi()$, and $\psi^{-1}()$ by means of Proposition
\ref{imp}, it only remains to check that the derivatives
$\frac{dk_{t+1}}{dk_t}=\psi_k(k)$,
$\frac{dk_{t}}{dk_{t+1}}=\psi^{-1}_k(k)$,
$\frac{dk_t}{dU_t}=\phi_u(U)$, and
$\frac{dU_t}{dk_t}=\phi^{-1}_k(k)$, are well defined \emph{at} the
steady state, where $k_t=k_{t+1}=k$. With these functions in
place, the properties of $\xi()=\phi^{-1}(\psi(\phi()))$ in
(\ref{10}) and its inverse
$\xi^{-1}()=\phi^{-1}(\psi^{-1}(\phi()))$ follow at once.

\paragraph{Derivation of $\psi()$ and $\psi^{-1}()$:} To show that an implicit
function $\psi()$, that satisfies
\begin{eqnarray} &&(1+n)k_{t+1}=s(w(k_t),r(k_{t+1})) \quad \Leftrightarrow \quad k_{t+1}=\psi(k_t), \quad \psi\in
\mathcal{C}^1\label{81}\end{eqnarray} exists, we note that the
implicit derivative
$\frac{dk_{t+1}}{dk_t}=\psi_k(k)=\frac{-s_wf''(k)k}{(1+n)-s_rf''(k)}$
is well defined. This, follows immediately from assumption
(\ref{7}), which ensures that
$0<\frac{-s_wf''(k)k}{(1+n)-s_rf''(k)}<1$. Moreover, all functions in this fraction are at least continuous. 
Hence, according to Proposition \ref{imp}, there exists a function
$\psi()$, such that $k_{t+1}=\psi(k_{t})$. Finally, since
$-s_wf''(k)k\neq0$, the inverse function $\psi^{-1}()$ exists and
has the derivative
$\psi^{-1}_k(k)=\frac{(1+n)-s_rf''(k)}{-s_wf''(k)k}>0$.

\paragraph{Derivation of $\phi()$ and $\phi^{-1}()$:} Recalling the household
problem (\ref{3}) and (\ref{4}) we have
\begin{eqnarray}U_t&&=U\Big(w(k_t)-s(w(k_t),r(k_{t+1})),(1+r(k_{t+1}))s(w(k_t),r(k_{t+1}))\Big)=\tilde{U}(k_t,k_{t+1})\nonumber\\
&&=_{|(\ref{8})}\tilde{U}(k_t,\psi(k_t)) \quad \Leftrightarrow
\quad k_t=\phi(U_t), \quad \phi\in
\mathcal{C}^1.\label{91}\end{eqnarray}

To understand $\phi()$ in (\ref{91}), it is useful to start with
the derivative of its inverse:
\begin{eqnarray}\phi^{-1}_k(k)=\frac{dU_t}{dk_t}=\frac{\partial U}{\partial
k_t}+\frac{\partial U}{\partial
k_{t+1}}\psi_k(k).\label{92}\end{eqnarray} Using the Euler
equation (\ref{4}) as an envelope, we rewrite (\ref{92}) such that
\begin{eqnarray}\phi^{-1}_k(k)&&=\frac{dU_t}{dk_t}=U_{c^1}\Big(-f''(k)k+\psi_k(k)f''(k)\frac{s}{1+r}\Big)\label{93}\\
&&=_{|(\ref{81})}-U_{c^1}\Big(k-\psi_k(k)\frac{(1+n)k}{1+r}\Big)f''(k),\nonumber\\
&&=-U_{c^1}\Big((1+r)-\psi_k(k)(1+n)\Big)\frac{1}{1+r}f''(k)k.
\label{94}\end{eqnarray} Hence, $\phi^{-1}_k(k)$ is continuous in
the capital intensity. 

However, (\ref{93}) may be zero. In this case, we have
$\frac{dU_t}{dk_t}=0$, and, \emph{locally}, our first-order
approximation does not record any changes in utility that
originate from changes in the capital intensity. 
The \emph{local} dynamics of utility are trivial in the sense that
utility is constant.\footnote{\label{f2} We may note that
(\ref{93}) implies that this special case, in which there are
levels of $k$ at which utility does not change (locally) with
capital, is only possible for the Walrasian economy which is dynamically inefficient $(r<n)$ such
that $0<\frac{1+r}{1+n}=\psi_k(k)<1$. In Marshallian economies,
which are cyclically stable, we have $-1<\psi_k(k)<0$ and utility
always changes with capital, i.e., derivative (\ref{93}) is never
zero. 

See \citet{Dia65}, p.
1132 and p. 1148 for the case where $\psi_k\in(-1,0)$. This case
can only occur if the life-cycle savings condition (\ref{6})
reaches equilibrium through \emph{changes in the quantity} of
capital/savings supplied (Marshallian adjustment). If the capital
market reaches equilibrium through \emph{changes in price
(interest)} (Walrasian adjustment), capital must converge
monotonically. \citet{Sam41}, pp. 102-106, and \citet{Hic1939},
pp. 115-129, discuss the Walrasian and the Marshallian market
mechanism in detail.} In all cases of interest, where utility changes over time,
$\phi()$ is well-defined and we have:
\begin{eqnarray}   \phi_u(U)=\frac{dk_t}{dU_t}=\frac{1}{\Big(-U_{c^1}\Big((1+r)-\psi_k(k)(1+n)\Big)\frac{1}{1+r}f''(k)k\Big)}.\label{95}\end{eqnarray}

\paragraph{Derivation of $\xi()$ and $\xi^{-1}()$:} Taken together, paragraphs 1 and 2 show that, as long as
utility changes over time, the function
$\xi()=\phi^{-1}(\psi(\phi()))$ in (\ref{10}) is well-defined. In
particular since $\psi()$, $\psi^{-1}()$, $\phi()$, and
$\phi^{-1}()$ are all continuously differentiable functions,
$\xi()$ and $\xi^{-1}()=\phi^{-1}(\psi^{-1}(\phi()))$ are also
both continuously differentiable functions. Hence, $\xi()$ is a
$\mathcal{C}^1$ diffeomorphism, which is what we needed to show. 

\end{appendix}

\end{document}